\DeclareMathOperator*{\argmin}{arg\,min}
\def\a{\alpha}
\def\b{\beta}
\def\g{\gamma}
\def\e{\epsilon}
\newcommand{\Rr}{\mathscr{R}}
\newcommand{\Cc}{\mathcal{C}}
\newcommand{\Xc}{\mathcal{X}}
\newcommand{\Yc}{\mathcal{Y}}
\newcommand{\Wc}{\mathcal{W}}
\newcommand{\Zc}{\mathcal{Z}}
\newcommand{\Sc}{\mathcal{S}}
\newcommand{\Ec}{\mathcal{E}}
\DeclareMathOperator\E{E}
\let\P\relax
\DeclareMathOperator\P{P}
\newtheorem{theorem}{Theorem}
\begin{document}

\title{\LARGE \bf Multiple Description Coding of Discrete Ergodic Sources}

\author{Shirin Jalali and Tsachy Weissman 
\thanks{S. Jalali is a postdoctoral fellow at the Center for the Mathematics of Information, California Institute of Technology,  Pasadena, CA  91125, USA        
{\tt\small shirin@caltech.edu}}%
\thanks{T. Weissman is with the Department of Electrical Engineering, Stanford  University,  Stanford, CA 94305, USA
        {\tt\small tsachy@stanford.edu}}%
}

\maketitle

\newcommand{\p}{\mathds{P}}
\newcommand{\mb}{\mathbf{m}}
\newcommand{\bb}{\mathbf{b}}

\begin{abstract}

We investigate the problem of Multiple Description (MD) coding of
discrete ergodic processes. We introduce the notion of MD
stationary coding, and characterize its relationship to the
conventional block MD coding. In stationary coding, in addition
to the two rate constraints normally considered in the MD
problem, we consider another rate constraint which reflects the
conditional entropy of the process generated by the third
decoder given the reconstructions of the two other decoders.
The relationship that we establish between stationary and block
MD coding enables us to devise  a universal algorithm for MD
coding of discrete ergodic sources, based on simulated
annealing ideas that were recently proven useful for the
standard rate distortion problem.

\end{abstract}

\section{INTRODUCTION}

Consider a packet network where a signal is to be described to
several receivers. In a basic setup, the source is coded by a
lossy encoder, and several copies of the packet containing the
source description is sent over the network to make sure that
each receiver gets at least one copy. Receiving more than one
copy of these packets is not advantageous, because all the
packets contain similar information. In contrast to this setup,
one can think of a more reasonable scenario where the packets
flooded into the network are not exactly the same; They are
designed such that receiving each one of them is sufficient for
recovering the source, but receiving more packets improves the
quality of the reconstructed signal. The described scenario is
referred to as {\it multiple description}.

The information-theoretic statement of the MD problem, and
early results on the MD problem can be found in
\cite{MD_Witsenhausen}-\cite{MD_WW}. Even for the seemingly
simple case where there are only two receivers, and the source
is i.i.d., the characterization of the achievable
rate-distortion region is not known in general. For this case,
there are two well-known inner bounds due to El Gamal-Cover
\cite{ElGamalCover_MD} and Zhang-Berger \cite{ZhangBerger_MD}.
There is also a combined region, introduced in
\cite{Goyal_combined_MD}, which includes both regions, but
recently shown to be no better than the Zhang-Berger region
\cite{Lei}. In any case, full characterization of the
achievable region is not yet known.

Since even for i.i.d.~sources, the single-letter
characterization of the achievable rate-distortion region is
not known in general, there are few works done on the MD of
non-i.i.d.~sources. The rate-distortion region of Gaussian
processes is derived in \cite{process_MD_dcc07}, and is shown
to be achievable using a scheme based on transform lattice
quantization. In \cite{process_MD_Effros}, a multi-letter
characterization of the achievable weighted rate-distortion
region of discrete stationary ergodic sources is derived.

In this paper, we consider the MD of discrete ergodic processes
where the distribution of the source is not known to the
encoder and decoder. We introduce a universal algorithm which
can asymptotically achieve any point in the achievable
rate-distortion region. In order to get this result, we start
by defining two notions of MD coding, namely, (i) conventional
block coding, and (ii) {\it stationary} coding. In the normal
block-coding MD, there are two rates but three reconstruction
processes. In the stationary coding setup, there are three
rates and three reconstruction processes. The additional rate
corresponds to the {\it conditional entropy rate} of the the
ergodic process reconstructed by the privileged decoder, which
receives two descriptions of the source, given the two other
ergodic reconstruction processes. We show that these two setups
are closely related and, in fact, characterize each other. The
beneficial point of the new definition is that it enables us to
devise a universal MD algorithm. The introduced algorithm takes
advantage of simulated annealing which was used recently in
\cite{JW_r_d} to design an asymptotically optimal universal
algorithm for lossy compression of discrete ergodic sources.

The outline of this paper is as follows: In Section \ref{sec:
notation} some preliminary notation, and definitions are
presented. Section \ref{sec: simple ex} studies a simple
example, which, as made clear later, is closely related to the
MD problem. Section \ref{sec: MD} formally defines the MD
problems, and the two notions of block MD coding and stationary
MD coding, and shows the relationship between the two. Based on
these results, a universal MD algorithm is described in Section
\ref{sec: universal MD}, and in Section \ref{sec:simulate} some simulation results demonstrating the performance of the proposed algorithm on simulated data are presented. Finally, Section \ref{sec: conclusion} discusses some future research directions.

\section{NOTATION}\label{sec: notation}

Let $\mathbf{X}=\{X_i;\forall\; i\in\mathds{N}^{+}\}$ be a
stochastic process defined on a probability space
$(\mathbf{X},\Sigma,\mu)$, where $\mu$ is
a probability measure defined on $\Sigma$, the
$\sigma$-algebra generated by the cylinder sets $\Cc$. For a process
$\mathbf{X}$, let $\Xc$ denote the alphabet set of $X_i$, which is
assumed to be finite in this paper. The shift operator
$T:\Xc^{\infty}\to\Xc^{\infty}$ is defined by
\[
(T\mathbf{x})_n=x_{n+1},\quad\mathbf{x}\in\Xc^{\infty},n\geq1.
\]
Moreover, for a stationary process $\mathbf{X}$, let
$\bar{H}(\mathbf{X})$ denote its entropy rate defined as
$\bar{H}(\mathbf{X})=\lim\limits_{n\to\infty}H(X_{n+1}|X^n)$.

Let $\Xc$ and $\hat{\Xc}$ denote the source and reconstruction
alphabets respectively. For $y^n\in\Yc^n$, define the matrix
$\mathbf{m}(y^n)$ to be the $|\Yc|\times|\Yc|^{k}$ matrix representing
the $(k+1)^{\rm th}$ order empirical distribution of $y^n$, i.e.,
its $(\b, \bb)^{\rm th}$ element is  defined as
\begin{equation}\label{eq: empirical count matrix}
m_{\b,\bb}(y^n) = \frac{1}{n} \left| \left\{ 1 \leq i \leq n :
y_{i-k}^{i-1} = \bb, y_i=\b]    \right\}\right|,
\end{equation}
where $\bb\in\Yc^k$, and $\beta\in\Yc$. In \eqref{eq: empirical
count matrix} and throughout we assume a cyclic convention
whereby $y_i \triangleq y_{n+i}$ for $i \leq 0$. Let $H_k
(y^n)$ denote the conditional empirical entropy of order $k$
induced by $y^n$, i.e.~
\begin{equation}\label{eq: emp cond distribution}
   H_k (y^n) = H(Y_{k+1} | Y^{k}) ,
\end{equation}
where $Y^{k+1}$ on the right hand side of (\ref{eq: emp cond
distribution}) is distributed according to
\begin{equation}\label{eq: empirical distribution}
   \P (Y^{k+1} = [\bb,\b]) = m_{\b,\bb}(y^n).
\end{equation}
The conditional empirical entropy in \eqref{eq: emp cond
distribution} can be expressed as a function of $\mb(y^n)$ as
follows
\begin{equation}\label{eq: alternative representation of Hk}
H_k (y^n) = \frac{1}{n} \sum_{\bb} \mathcal{H} \left(
\mb_{\cdot,\bb}(y^n) \right) \mathbf{1}^T \mb_{\cdot,\bb}(y^n),
\end{equation}
where $\mathbf{1}$ and  $\mb_{\cdot,\bb}(y^n)$ denote the
all-ones column vector of length $|\Yc|$, and the column in
$\mb(y^n)$ corresponding to $\bb$ respectively. For a vector
$\mathbf{v}= (v_1, \ldots , v_\ell)^T$ with non-negative
components, we let $\mathcal{H}(\mathbf{v})$ denote the entropy
of the random variable whose probability mass function (pmf) is
proportional to $\mathbf{v}$. Formally,
\begin{equation}\label{eq: single letter ent functional defined}
\mathcal{H} (\mathbf{v}) = \left\{ \begin{array}{cc}
                           \sum_{i=1}^\ell \frac{v_i}{\| \mathbf{v}
\|_1}  \log \frac{\| \mathbf{v} \|_1}{v_i} &  \mbox{ if }  \mathbf{v}
\neq (0, \ldots , 0)^T \\
                           0 & \mbox{ if } \mathbf{v}  = (0, \ldots , 0)^T.
                         \end{array}
\right.
\end{equation}

Let $\mathbf{m}(w^n|y^n,z^n)$ denote  the conditional $k^{\rm
th}$ order empirical distribution of $w^n$ given $y^n$ and
$z^n$, whose $(\b, \bb_0, \bb_1, \bb_2)^{\rm  th}$ element is
defined as
\begin{align}
& m_{\beta,\mathbf{b}_0,\mathbf{b}_1,\mathbf{b}_2} =\nonumber\\
& \frac{1}{n}\left|\left\{i: w_i=\beta, w_{i-k}^{i-1}=\mathbf{b}_0,y_{i-k_1}^{i+k_1}=\mathbf{b}_1,z_{i-k_1}^{i+k_1}=\mathbf{b}_2\right\}\right|,
\end{align}
where $\beta\in\Wc$, $\mathbf{b}_0\in\Wc^{k}$, $\mathbf{b}_1\in\Yc^{2k_1+1}$, and $\mathbf{b}_2\in\Zc^{2k_1+1}$.
Now define the conditional empirical entropy of $w^n$ given
$y^n$ and $z^n$, $H_{k,k_1}(y^n|w^n,z^n)$, in terms of
$\mathbf{m}(w^n|y^n,z^n)$ as
\begin{align}
H_{k,k_1}(w^n|y^n,z^n)
&=\sum\limits_{\mathbf{b}_0,\mathbf{b}_1,\mathbf{b}_2}\mathbf{1}^T
\mb_{\cdot,\mathbf{b}_0,\mathbf{b}_1,\mathbf{b}_2} \mathcal{H}
\left( \mb_{\cdot,\mathbf{b}_0,\mathbf{b}_1,\mathbf{b}_2}  \right).
\end{align}

\begin{figure}[h]
\begin{center}
\psfrag{R1}[r]{$R_1$ bits}
\psfrag{R2}[r]{$R_2$ bits}
\psfrag{S}[r]{$(S_1,S_2,S_0)$}
\psfrag{S1}[l]{$\hat{S_1}$}
\psfrag{S2}[l]{$\hat{S_2}$}
\psfrag{S0}[l]{$\hat{S_0}$}
\includegraphics[width=3.5cm]{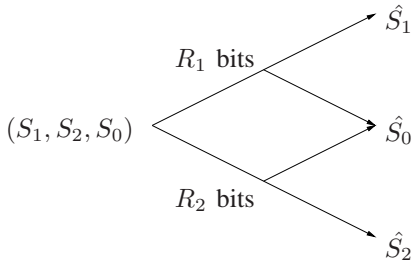}\caption{Example setup}\label{fig:3}
\end{center}
\end{figure}

\section{SIMPLE EXAMPLE} \label{sec: simple ex}

Before formally defining the MD problem, consider the setup
shown in Fig.~\ref{fig:3}. This example is meant to provide
some insight into the MD problem.  Also, the results of this
section will be used in the proof of Theorem \ref{thm: 1} in
Appendix A. Here $S_1\in\Sc_1$, $S_2\in\Sc_2$ and $S_0\in\Sc_0$
denote three correlated discrete-valued random variables, and
$(S_1,S_2,S_0)\sim \P(s_1,s_2,s_0)$. The Encoder's goal is to
send $R_1$ bits to Decoder $1$, and $R_2$ bits to Decoder $2$
such that Decoder $1$ and $2$ are able to reconstruct $S_1$ and
$S_2$ respectively. Moreover, the transmitted bits are required
to be such that receiving both of them enables Decoder $0$ to
reconstruct $S_0$. In all three cases, the probability of error
is assumed to be zero. Let $M_1\in\{1,\ldots,2^{R_1}\}$, and
$M_2\in\{1,\ldots,2^{R_2}\}$ denote the messages sent to the
decoders $1$ and $2$ respectively. The question is to find the
set of achievable rates $(R_1,R_2)$. The following theorem
states some necessary conditions for $(R_1,R_2)$ to be
achievable.  It is very similar to Theorem 2 of \cite{ElGamalCover_MD}, and the two theorems are in fact easily seen to prove each other. The version we give here is most suited for our later needs.
\begin{theorem}\label{thm: 0}
For any achievable rate $(R_1,R_2)$ for the setup shown in
Fig.~\ref{fig:3},
\begin{align}
R_1\geq &H(S_1)\nonumber\\
R_2\geq &H(S_2)\nonumber\\
 R_1+R_2\geq &H(S_1)+H(S_2)+H(S_0|S_1,S_2).\label{eq: cond thm 0}
\end{align}
\end{theorem}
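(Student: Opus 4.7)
The plan is to derive all three inequalities from a handful of elementary entropy identities, leveraging two structural consequences of the zero-error decoding requirement: $S_1$ is a deterministic function of $M_1$, $S_2$ is a deterministic function of $M_2$, and $S_0$ is a deterministic function of the pair $(M_1,M_2)$. Combined with the cardinality bound $R_i \geq H(M_i)$ (since $M_i$ takes at most $2^{R_i}$ values), the individual rate bounds follow from the identity $H(M_1) = H(M_1,S_1) = H(S_1) + H(M_1|S_1) \geq H(S_1)$, and analogously for $M_2$.

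For the sum bound, the naive route $R_1 + R_2 \geq H(M_1,M_2) \geq H(S_0,S_1,S_2) = H(S_1,S_2) + H(S_0|S_1,S_2)$ is strictly weaker than the claimed inequality whenever $I(S_1;S_2) > 0$, since $H(S_1,S_2) = H(S_1) + H(S_2) - I(S_1;S_2)$. The fix is to keep $H(M_1)$ and $H(M_2)$ separate and expand each as $H(M_i) = H(S_i) + H(M_i|S_i)$, so that
\[
R_1 + R_2 \;\geq\; H(M_1) + H(M_2) \;=\; H(S_1) + H(S_2) + H(M_1|S_1) + H(M_2|S_2).
\]
It then remains to check that $H(M_1|S_1) + H(M_2|S_2) \geq H(S_0|S_1,S_2)$, which I would argue in three steps: (i) conditioning reduces entropy, so $H(M_i|S_i) \geq H(M_i|S_1,S_2)$; (ii) subadditivity of conditional entropy gives $H(M_1|S_1,S_2) + H(M_2|S_1,S_2) \geq H(M_1,M_2|S_1,S_2)$; and (iii) since $S_0$ is a function of $(M_1,M_2)$, we have $H(M_1,M_2|S_1,S_2) \geq H(M_1,M_2,S_0|S_1,S_2) \geq H(S_0|S_1,S_2)$. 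Chaining these three inequalities and combining with the display above yields the desired sum bound.

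The main subtlety to watch for is the temptation to collapse $R_1 + R_2$ into $H(M_1,M_2)$ at the outset, which discards precisely the $I(S_1;S_2)$ overlap between the two messages that the bound exploits. Preserving the decomposition $H(M_i) = H(S_i) + H(M_i|S_i)$ keeps this slack available and routes the joint-decoder constraint through the conditional terms $H(M_i|S_i)$, which is what allows the full $H(S_0|S_1,S_2)$ to appear \emph{additively} on top of $H(S_1) + H(S_2)$ rather than on top of the smaller quantity $H(S_1,S_2)$.
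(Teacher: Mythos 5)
Your proof is correct and takes essentially the same route as the paper's: the identical decomposition $R_1+R_2 \geq H(M_1)+H(M_2) = H(S_1)+H(S_2)+H(M_1|S_1)+H(M_2|S_2)$, followed by the same three steps of conditioning on $(S_1,S_2)$, subadditivity, and using that $S_0$ is a deterministic function of $(M_1,M_2)$. Your handling of the individual bounds via $H(M_i)=H(M_i,S_i)=H(S_i)+H(M_i|S_i)$ is merely a slightly more explicit version of what the paper leaves implicit.
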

\begin{proof}
$R_1 \geq H(M_1)$ and $R_2\geq H(M_2)$ follow from Shannon's lossless coding Theorem. It is also
clear that we should have
\begin{align}
R_1+R_2&\geq H(S_1,S_2,S_0)\nonumber\\
&= H(S_1,S_2)+H(S_0|S_1,S_2). \label{eq: cond on R1 + R2}
\end{align}
But, perhaps somewhat counterintuitively, \eqref{eq: cond on R1
+ R2} is just an outer bound, and is not enough. $R_1+R_2$ in fact satisfies the tighter condition
stated in \eqref{eq: cond thm 0}, as can be seen via the following chain of inequalities:
\begin{align}
R_1+R_2 & \geq H(M_1) + H(M_2),\nonumber\\
		& = H(M_1,S_1) + H(M_2,S_2),\nonumber\\
		& = H(S_1) + H(M_1|S_1) + H(S_2) + H(M_2|S_2),\nonumber\\
		& \geq H(S_1) + H(S_2) +  H(M_1|S_1,S_2) +\nonumber\\
		&\hspace{5mm} H(M_2|S_1,S_2),\nonumber\\
		& \geq H(S_1) + H(S_2) +  H(M_1,M_2|S_1,S_2) ,\nonumber\\
		& \geq H(S_1) + H(S_2) +  H(M_1,M_2,S_0|S_1,S_2), \nonumber\\
		& \geq H(S_1) + H(S_2) +  H(S_0|S_1,S_2) .
\end{align}
\end{proof}

\section{MULTIPLE DESCRIPTION PROBLEM}\label{sec: MD}

Consider the basic setup of MD problem shown in
Fig.~\ref{fig:2}. In this figure, $X^n$ is generated by a
stationary ergodic source $\mathbf{X}$.

\begin{figure}[t]
\begin{center}
\psfrag{X}[c]{$X^n$}
\psfrag{M1}[b]{$M_1$}
\psfrag{M2}[t]{$M_2$}
\psfrag{enc}[c]{\rm Encoder}
\psfrag{dec1}[c]{\rm Decoder 1}
\psfrag{dec2}[c]{\rm Decoder 2}
\psfrag{dec12}[c]{\rm Decoder 0}
\psfrag{X1}[l]{$\hat X_{1}^n$}
\psfrag{X2}[l]{$\hat X_{2}^n$}
\psfrag{X12}[l]{$\hat X_{0}^n$}
\includegraphics[width=7cm]{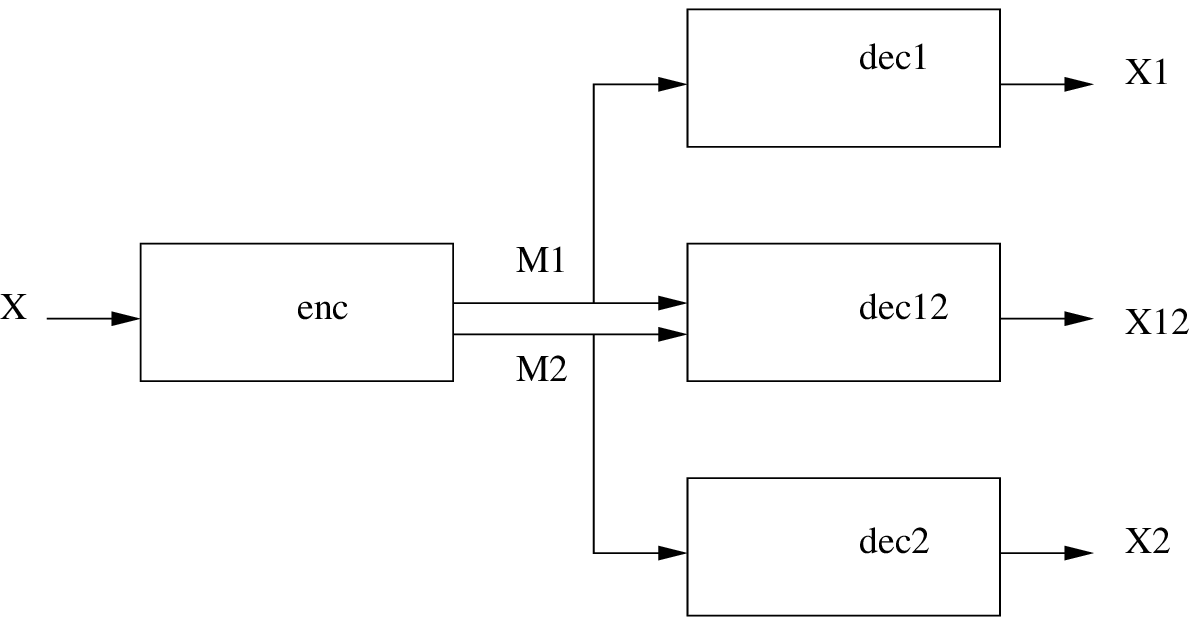}\caption{MD
coding setup}\label{fig:2}
\end{center}
\end{figure}

\textbf{Remark:} In order to see the connection between the
example described in Section \ref{sec: simple ex}, and the MD
problem, note that letting $S_i=\hat{X}_i^n$, $i\in\{1,2\}$,
and $S_0=\hat{X}_0^n$, the MD problem can be described as the
problem of describing $(S_1,S_2,S_0)$ to the respected
receivers error-free. In other words, for each code design, we
have a problem equivalent to the one described in Section
\ref{sec: simple ex}.

\subsection{Block coding:}

MD coding problem can be described in terms of encoding
mapping $f$, and decoding mappings $(g_1,g_2,g_{0})$
as follows

\begin{enumerate}
\item $f:\Xc^n\to[1:2^{nR_1}]\times[1:2^{nR_2}]$,
\item $g_i:[1:2^{nR_i}]\to\hat{\Xc}^n$, for $i=1,2$,
\item $g_{0}:[1:2^{nR_1}]\times[1:2^{nR_2}]\to\hat{\Xc}^n$,
\item $(M_1,M_2)=f(X^n)$,
\item $\hat{X}^n_i=g_i(M_i)$, for $i=1,2$,
\item $\hat{X}^n_0=g_{0}(M_1,M_2)$.
\end{enumerate}

$(R_1,R_2,D_1,D_2,D_{0})$ is said to be achievable for this
setup,
if there exists a sequence of codes \\
$(f^{(n)},g^{(n)}_1,g^{(n)}_2,g^{(n)}_{0})$ such
that
\begin{align*}
\limsup\limits_{n}\E d_n(X^n,\hat{X}^n_i)&\leq D_i, \textmd{ for } \; i=1,2,\\
\limsup\limits_{n}\E d_n(X^n,\hat{X}^n_0)&\leq D_{0}.
\end{align*}

Let $\Rr^{\rm B}$  be the set of all $(R_1,R_2,D_1,D_2,D_{0})$
that are  achievable by block MD coding of source $
\mathbf{X}$.

\subsection{Stationary coding:}

Define $(R_{11},R_{22},R_0,D_1,D_2,D_0)$ to be achievable by
stationary coding of source $\mathbf{X}$, if for any $\e>0$, there exist
processes $\hat{\mathbf{X}}_1^{(\e)}$, $\hat{\mathbf{X}}^{(\e)}_2$ and
$\hat{\mathbf{X}}_0^{(\e)}$ such that
$(\mathbf{X},\hat{\mathbf{X}}^{(\e)}_1,\hat{\mathbf{X}}^{(\e)}_2,\hat{\mathbf{X}}^{(\e)}_{0})$
are jointly stationary ergodic processes, and
\begin{align}
\bar{H}(\hat{\mathbf{X}}^{(\e)}_1) &\leq R_{11}+\e\label{eq:1}\\
\bar{H}(\hat{\mathbf{X}}^{(\e)}_2) &\leq R_{22}+\e\label{eq:2}\\
\bar{H}(\hat{\mathbf{X}}^{(\e)}_{0}|\hat{\mathbf{X}}^{(\e)}_1,\hat{\mathbf{X}}^{(\e)}_2)+
&\leq R_{0}+\e\label{eq:3}\\
\E d(X_0,\hat{X}^{(\e)}_{1,0}) &\leq D_1+\e\label{eq:1-2}\\
\E d(X_0,\hat{X}^{(\e)}_{2,0}) &\leq D_2+\e\label{eq:2-2}\\
\E d(X_0,\hat{X}^{(\e)}_{0,0}) &\leq D_{0}+\e\label{eq:3-2}.
\end{align}

Let $\Rr^{\rm P}$  denote the set of all
$(R_{11},R_{22},R_0,D_1,D_2,D_0)$  that are achievable by
stationary MD coding of source $\mathbf{X}$. The following
theorem characterizes $\Rr^{\rm B}$ in terms of $\Rr^{\rm P}$ .

\begin{theorem} \label{thm: 1} Let $\mathbf{X}$ be a stationary ergodic source.
For any $(R_1,R_2,D_1,D_2,D_0)\in\Rr^{\rm B}$, 
there exists  $(R_{11},R_{22},R_0,D_1,D_2,D_0)\in \Rr^{\rm P}$ such that 
\begin{align}
R_{11}&\leq R_1\label{eq: R_1 in terms of R11}\\
R_{22}&\leq R_2\label{eq: R_2 in terms of R22}\\
R_{11}+R_{22}+R_0&\leq R_1+R_2 \label{eq: R1+R2 in terms of R11 and R22 and R0}
\end{align}

On the other hand, if $(R_{11},R_{22},R_0,D_1,D_2,D_0)\in
\Rr^{\rm P}$, any point $(R_1,R_2,D_1,D_2,D_0)$
satisfying \eqref{eq: R_1 in terms of R11}-\eqref{eq: R1+R2 in
terms of R11 and R22 and R0} belongs to $\Rr^{\rm B}$.
\end{theorem}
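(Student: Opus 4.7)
The plan is to prove the two directions separately, with Theorem~0 doing the conceptual work in both. Its sum-rate inequality is exactly the finite-block shadow of the conditional-entropy-rate constraint \eqref{eq:3} that distinguishes $\Rr^{\rm P}$ from a naive inner bound of $\Rr^{\rm B}$.

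For the direction $\Rr^{\rm B}\Rightarrow\Rr^{\rm P}$, I would start from a sequence of block codes $(f^{(n)},g_1^{(n)},g_2^{(n)},g_0^{(n)})$ achieving $(R_1,R_2,D_1,D_2,D_0)$ and stationarize each code by the standard construction: extend $\mathbf{X}$ bilaterally, apply the code to consecutive non-overlapping blocks of length $n$ to obtain three reconstruction sequences, and shift the block grid by an independent uniform random phase $U\in\{0,\ldots,n-1\}$. This yields a jointly stationary tuple $(\mathbf{X},\hat{\mathbf{X}}_1^{(n)},\hat{\mathbf{X}}_2^{(n)},\hat{\mathbf{X}}_0^{(n)})$, and a routine calculation gives $\bar H(\hat{\mathbf{X}}_i^{(n)})\leq \tfrac{1}{n}H(\hat X_i^n)$ for $i=1,2$ together with $\bar H(\hat{\mathbf{X}}_0^{(n)}\mid\hat{\mathbf{X}}_1^{(n)},\hat{\mathbf{X}}_2^{(n)})\leq\tfrac{1}{n}H(\hat X_0^n\mid\hat X_1^n,\hat X_2^n)$. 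Invoking Theorem~0 with $S_i\leftarrow\hat X_i^n$ then produces $nR_1\geq H(\hat X_1^n)$, $nR_2\geq H(\hat X_2^n)$, and $n(R_1+R_2)\geq H(\hat X_1^n)+H(\hat X_2^n)+H(\hat X_0^n\mid\hat X_1^n,\hat X_2^n)$, so setting $(R_{11},R_{22},R_0)$ equal to the three entropy rates of the stationarized process verifies \eqref{eq: R_1 in terms of R11}--\eqref{eq: R1+R2 in terms of R11 and R22 and R0}. The distortion constraints transfer because phase randomization averages the per-block distortion against $U$, and the ergodic theorem then converts this into a stationary per-letter distortion.

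For the reverse direction $\Rr^{\rm P}\Rightarrow\Rr^{\rm B}$, fix a jointly stationary ergodic realization $(\mathbf{X},\hat{\mathbf{X}}_1,\hat{\mathbf{X}}_2,\hat{\mathbf{X}}_0)$ of the stationary point (the $\e$-slack is absorbed in a standard limiting argument) and fix $(R_1,R_2)$ satisfying the three inequalities. The encoder, knowing $X^n$, selects a jointly typical triple $(\hat X_1^n,\hat X_2^n,\hat X_0^n)$ from a shared codebook drawn from the joint marginal; by ergodicity and Shannon--McMillan--Breiman this succeeds with high probability. It then describes $\hat X_1^n$ in $n\bar H(\hat{\mathbf{X}}_1)+o(n)$ bits, $\hat X_2^n$ in $n\bar H(\hat{\mathbf{X}}_2)+o(n)$ bits, and $\hat X_0^n$ given $(\hat X_1^n,\hat X_2^n)$ in $n\bar H(\hat{\mathbf{X}}_0\mid\hat{\mathbf{X}}_1,\hat{\mathbf{X}}_2)+o(n)$ extra bits via conditional entropy coding. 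The inequality $R_1+R_2\geq R_{11}+R_{22}+R_0$ is precisely what allows splitting these extra bits into pieces $r_1,r_2\geq 0$ with $r_1+r_2=R_0$ and $R_{ii}+r_i\leq R_i$, which are then appended to descriptions $1$ and $2$ respectively. Decoder~$0$, seeing both, recovers $\hat X_0^n$, and the ergodic theorem converts the stationary per-letter distortion into the required block distortion bounds.

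The main obstacle I anticipate is handling \emph{ergodicity} cleanly in the first direction, since pure phase randomization of a block code produces a stationary process that is in general only a mixture of ergodic components, whereas $\Rr^{\rm P}$ is defined in terms of a genuinely ergodic joint tuple. I would address this either by approximating the block code by a sliding-block code in the Gray--Ornstein fashion (which produces stationary ergodic joint encodings directly), or by passing to the ergodic decomposition and using the $\e$-slack in \eqref{eq:1}--\eqref{eq:3-2} to restrict to a single well-behaved component. Once that technicality is settled, the rate inequalities themselves follow from Theorem~0 as sketched.
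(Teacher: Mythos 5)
Your proposal is essentially the paper's own argument: the forward direction stationarizes the block code and bounds the three rates by invoking Theorem~\ref{thm: 0} with $S_i=\hat X_i^n$ (the paper performs the stationarization directly via the Rohlin--Kakutani theorem following Gray's sliding-block coding, i.e., your first proposed fix for ergodicity, which is the right one --- your ergodic-decomposition fallback would not work in general, since all six quantities are affine over the decomposition and an on-average guarantee does not yield a single component satisfying all six constraints simultaneously), and the reverse direction is the paper's same scheme of losslessly describing $\hat X_1^n$, $\hat X_2^n$, and then $\hat X_0^n$ conditionally, splitting the $nR_0$ extra bits between the two messages exactly as permitted by \eqref{eq: R1+R2 in terms of R11 and R22 and R0}. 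Apart from your initial random-phase formulation (later corrected to the sliding-block construction), there is no essential difference from the paper's proof.
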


\begin{proof}
Refer to Appendix A for an outline of the proof.
\end{proof}

{\bf Remark:} The theorem implies that $\Rr^{\rm B}$ can be
characterized as the set of $(R_1,R_2,D_1,D_2,D_0)$ such that
\begin{align*}
\bar{H}(\hat{\mathbf{X}}_1) &\leq R_1\\
\bar{H}(\hat{\mathbf{X}}_2) &\leq R_2\\
\bar{H}(\hat{\mathbf{X}}_1)+\bar{H}(\hat{\mathbf{X}}_2)+\bar{H}(\hat{\mathbf{X}}_{0}|\hat{\mathbf{X}}_1,\hat{\mathbf{X}}_2)
&\leq R_1+R_2,
\end{align*}
for some jointly stationary ergodic processes
$(\mathbf{X},\hat{\mathbf{X}}_1,\hat{\mathbf{X}}_2,\hat{\mathbf{X}}_{0})$
which satisfy \eqref{eq:1-2}-\eqref{eq:3-2}.

\section{UNIVERSAL MULTIPLE DESCRIPTION CODING}\label{sec: universal
MD}

Equipped with the characterization of the achievable region
established in the previous section, we now turn to our
construction of a universal scheme for this problem. Consider
the following MD algorithm for the setup shown in
Fig.~\ref{fig:2}. Let
\begin{align}
&(\hat{x}_1^n,\hat{x}_2^n,\hat{x}_{0}^n)\triangleq \nonumber\\
& \argmin \limits_{(y^n,z^n,w^n)}\left[\g_1 H_k(y^n)+\g_2H_k(z^n)+\g_{0}H_{k,k_1}(w^n|y^n,z^n)\right.\nonumber\\
&\left.+\alpha_1d_n(x^n,y^n)+\alpha_2d_n(x^n,z^n)+\alpha_{0}d_n(x^n,w^n)\right],\label{eq:universal MD encoder}
\end{align}

Assume that $\g_i\geq0$ and $\alpha_i\geq0$, for
$i\in\{0,1,2\}$, are given Lagrangian coefficients. Also,
assume that $k_1\leq k=o(\log n)$ such that $k_1\to\infty$ as
$n\to\infty$.

\begin{theorem}\label{thm: 2}
Let $\mathbf{X}$ be a stationary ergodic process, and
$(\hat{X}_1^n,\hat{X}_2^n,\hat{X}_0^n)$ denote the output of
the above algorithm to input sequence $X^n$. Then,
\begin{align}
&\limsup\limits_n\nonumber\\
&\left[\g_1 H_k(\hat{X}_1^n)+\g_2H_k(\hat{X}_2^n)+\g_0
H_{k,k_1}(\hat{X}_0^n|\hat{X}_1^n,\hat{X}_2^n)+\right.\nonumber\\
&\left.\alpha_1d_n(X^n,\hat{X}_1^n)+\alpha_2d_n(X^n,\hat{X}_2^n)+\alpha_{0}d_n(X^n,\hat{X}_0^n)\right]
\nonumber\\
&=
\min\;[\g_1 R_{11}+\g_2R_{22}+\g_{0}R_0+\alpha_1D_1+\alpha_2D_2+\alpha_{0}D_0]
\label{eq: statement thm2}
\end{align}
almost surely, where the minimization is over all
$(R_{11},R_{22},R_0,D_1,D_2,D_0)\in \Rr^{\rm P}$.
\end{theorem}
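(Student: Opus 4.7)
My plan is to sandwich $J_n^*$, the minimum value of the bracketed expression in \eqref{eq:universal MD encoder} when the input is $X^n$, between matching almost-sure upper and lower bounds equal to the right-hand side of \eqref{eq: statement thm2}. For the $\limsup$ (achievability) direction I would fix any $(R_{11},R_{22},R_0,D_1,D_2,D_0)\in\Rr^{\rm P}$ and $\epsilon>0$, invoke the definition of $\Rr^{\rm P}$ to obtain jointly stationary ergodic processes $(\mathbf{X},\hat{\mathbf{X}}_1^{(\epsilon)},\hat{\mathbf{X}}_2^{(\epsilon)},\hat{\mathbf{X}}_0^{(\epsilon)})$ satisfying \eqref{eq:1}--\eqref{eq:3-2}, and use their first-$n$ samples as a (feasible but not necessarily optimal) candidate in the minimisation. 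Because $k=o(\log n)$ with $k_1\to\infty$, Birkhoff's theorem for the distortion terms and the standard convergence of $k$-th (and joint $(k,k_1)$-th) order empirical conditional entropies of a jointly stationary ergodic quadruple to the corresponding conditional entropy rates drive all six terms of the Lagrangian to their asymptotic limits a.s.; together with non-negativity of the $\gamma_i$, $\alpha_i$ and with \eqref{eq:1}--\eqref{eq:3-2} this yields $\limsup_n J_n^*\leq\gamma_1R_{11}+\gamma_2R_{22}+\gamma_0R_0+\alpha_1D_1+\alpha_2D_2+\alpha_0D_0+O(\epsilon)$, and letting $\epsilon\downarrow 0$ and minimising over $\Rr^{\rm P}$ closes this direction.

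For the $\liminf$ (converse) direction I would, given an arbitrary candidate $(y^n,z^n,w^n)$, extend the joint cyclic $(k_1+k+1)$-th order empirical distribution of $(X^n,y^n,z^n,w^n)$ to a stationary (generally non-ergodic) process $(\tilde{\mathbf{X}},\tilde{\mathbf{X}}_1,\tilde{\mathbf{X}}_2,\tilde{\mathbf{X}}_0)$ by uniform cyclic shifting of the length-$n$ block. By construction its $k$-th order conditional entropies equal $H_k(y^n)$, $H_k(z^n)$, $H_{k,k_1}(w^n|y^n,z^n)$ and its expected per-letter distortions equal the three empirical distortions in the Lagrangian, so $J_n^*$ upper bounds the Lagrangian evaluated with entropy \emph{rates} in place of empirical conditional entropies (using non-negativity of the $\gamma_i$ and the fact that a stationary process's entropy rate is no larger than any of its $k$-th order conditional entropies). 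I would then apply the ergodic decomposition and the affinity of entropy rate under that decomposition to extract an ergodic component with Lagrangian no larger than the mixture's, and argue that ergodicity of $\mathbf{X}$ forces the $\tilde{\mathbf{X}}$-marginal of any such component to coincide with the law of $\mathbf{X}$ in the $n\to\infty$ limit, so the component is a legitimate witness for some point of $\Rr^{\rm P}$ whose Lagrangian value is at most $J_n^*$.

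The hard part is the converse, specifically showing that the ergodic component extracted from the cyclic-shift extension really has the true source law of $\mathbf{X}$ as its first marginal, rather than an artefact of the specific realisation $X^n$. The resolution rests on ergodicity of $\mathbf{X}$ (which a.s.\ pins the empirical distribution of $X^n$ and all of its cyclic shifts to the single stationary law of $\mathbf{X}$, so that any limiting joint law must have this marginal, and, by uniqueness of ergodic decomposition, so must each of its ergodic components) together with the growth conditions $k=o(\log n)$ and $k_1\to\infty$, which simultaneously make $H_k$ and $H_{k,k_1}$ asymptotically tight estimators of the true conditional entropy rates while keeping them bounded. Assembling these two facts converts the per-$n$ stationary construction into a genuine element of $\Rr^{\rm P}$, matches the achievability bound, and establishes \eqref{eq: statement thm2}.
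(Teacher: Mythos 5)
Your achievability half is exactly the paper's argument: take the first $n$ samples of the witness processes guaranteed by the definition of $\Rr^{\rm P}$ as a feasible candidate in \eqref{eq:universal MD encoder}, and use the almost-sure convergence of $H_k$ and $H_{k,k_1}$ (for $k,k_1=o(\log n)$, $k,k_1\to\infty$) to the corresponding entropy rates together with the ergodic theorem for the distortion terms. That direction is fine and matches Appendix B.

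The converse is where you depart from the paper, and there is a genuine gap. Your per-$n$ stationarization (uniform cyclic shift of the length-$n$ block, extended periodically) is a process of period $n$: its first marginal is the periodized $X^n$, \emph{not} the law of $\mathbf{X}$, so it witnesses no point of $\Rr^{\rm P}$, whose definition requires the quadruple to be jointly stationary ergodic with first component distributed exactly as the source; worse, its entropy rates are all zero, so if the marginal constraint were waived your construction would ``prove'' that rate $0$ is achievable at distortion $d_n(x^n,y^n)$, which is false --- showing the marginal matching is the entire content of the step. Your proposed repair, passing to an $n\to\infty$ weak limit $\tilde{P}$ (where ergodicity of $\mathbf{X}$ does pin the first marginal to $P_{\mathbf{X}}$, and extremality of ergodic measures does pin the marginal of a.e.\ ergodic component), restores the marginal but destroys the rate bound: entropy rate $\bar{H}=\inf_j H_j$ is an infimum of weakly continuous functionals, hence only upper semicontinuous, and for each fixed $j$ you get $H_j(\tilde{P})=\lim_n H_j(y^{(n)})\geq \lim_n H_{k_n}(y^{(n)})$, so $\bar{H}(\tilde{P})\geq \lim_n H_{k_n}(y^{(n)})$ --- the \emph{reverse} of the inequality your Lagrangian comparison needs (the same failure afflicts the conditional term $\bar{H}(\tilde{\mathbf{W}}|\tilde{\mathbf{Y}},\tilde{\mathbf{Z}})$). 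Concretely, candidates built from periodic repetitions of de Bruijn blocks of slowly growing order have $H_j(y^n)\to\log|\hat{\Xc}|$ for every fixed $j$ while $H_{k_n}(y^n)=0$, and their weak limit has full entropy rate; nothing in your argument rules out such minimizers. The paper's lower bound is instead operational: because $k,k_1=o(\log n)$, a sequence with conditional empirical entropy $H_k(y^n)$ really is losslessly describable in $nH_k(y^n)+o(n)$ bits (the $k$-th order model costs only $|\hat{\Xc}|^{k+1}\log(n+1)=o(n)$ bits; conditional Lempel--Ziv handles the $H_{k,k_1}$ term), so the minimizer followed by universal lossless coding is a legitimate sequence of block MD codes, and then Theorem \ref{thm: 1} together with an almost-sure converse gives that no realization can beat $\mu(\boldsymbol\gamma,\boldsymbol\alpha)$ on a set of positive probability --- precisely the paper's one-line lower bound. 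This compressibility fact is what correctly penalizes the de Bruijn-type candidates, and your ergodic-decomposition route has no substitute for it.
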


The proof of Theorem \ref{thm: 2} is presented in Appendix B.

After finding $(\hat{x}_1^n,\hat{x}_2^n,\hat{x}_{0}^n)$,
$\hat{x}_1^n$ and $\hat{x}_2^n$ will be described to Decoders 1
and 2 respectively using one of the well-known universal
lossless  compression algorithms, e.g.,~Lempel Ziv algorithm.
Then Encoder forms a description of  $\hat{x}_0^n$ conditioned
on knowing $\hat{x}_1^n$ and $\hat{x}_2^n$  using conditional
Lempel Ziv algorithm or some other universal algorithm for
lossless coding with side information \cite{cond_LZ}.  A
portion $0\leq \theta\leq 1$ of these bits will be included in
the message $M_1$ and the rest in message $M_2$.


For finding an approximate  solution of \eqref{eq:universal MD encoder} instead of doing the required exhaustive search
directly, as done in \cite{JW_r_d}, one can employ simulated
annealing \cite{book: Markov chains}. To do this, we assign a
cost to each
$(y^n,z^n,w^n)\in\hat{\Xc}^n\times\hat{\Xc}^n\times\hat{\Xc}^n$
as follows
\begin{align}
\Ec(y^n&,z^n,w^n):=\nonumber\\ 
&\g_1H_k(y^n)+\g_2H_k(z^n)+\g_{0}H_{k,k_1}(w^n|y^n,z^n)\nonumber\\
&+\alpha_1d_n(x^n,y^n)+\alpha_2d_n(x^n,z^n)+\alpha_{0}d_n(x^n,w^n),\nonumber
\end{align}
and then define the Boltzmann probability distribution at
temperature $T=1/{\b}$ as
\begin{align}
p_{\b}(y^n,z^n,w^n):=\frac{1}{Z}e^{-\b\Ec(y^n,z^n,w^n)},\label{eq: Boltzmann dist}
\end{align}
where $Z$ is a normalizing constant. Sampling from this
distribution at a very low temperature  yields
$(\hat{X}_1^n,\hat{X}_2^n,\hat{X}_{0}^n)$ with energy close to
the minimum possible energy, i.e.,
\begin{align}
\Ec(\hat{X}_1^n,\hat{X}_2^n,\hat{X}_{0}^n)\approx\min\limits_{(y^n,z^n,w^n)}\Ec(y^n,z^n,w^n).
\end{align}
Since sampling from \eqref{eq: Boltzmann dist} at low
temperatures is almost as hard as doing the exhaustive search,
we turn to simulated annealing (SA) which is a known method for
solving discrete optimization problems. The SA procedure works
as follows: it first defines Boltzmann distribution over the
optimization space, and then tries to sample from the defined
distribution while gradually decreasing the temperature from
some high $T$ to zero according to a properly chosen
\emph{annealing schedule}.

Given $\Ec(y^n,z^n,w^n)$, similarly as in \cite{JW_r_d}, the
number of computations required for calculating \\
$\Ec(y^{i-1}ay_{i+1}^n, z^{i-1}bz_{i+1}^n, w^{i-1}cw_{i+1}^n)$
, when only one of the following is true: $a\neq y_i$, $b\neq
z_i$, or $c\neq w_i$, for some $i\in\{1,\ldots,n\}$ and
$a,b,c\in\hat{\Xc}$, is linear in $k$ and $k_1$, and is
independent of $n$. Therefore, this energy function lends
itself to a heat bath type algorithm as simply and naturally as
the one in the original setting of \cite{JW_r_d} did.

Now consider Algorithm \ref{alg:mcmc_MD} which is based on the Gibbs sampling method for sampling from $p_\beta$, and let $(\hat{X}^{n}_{1,r},\hat{X}^n_{2,r},\hat{X}^n_{0,r})$ denote its random outcome for the input sequence $X^n$ after $r$ iterations\footnote{Here and throughout it is implicit that the randomness used in the algorithms is independent of the source, and the randomization variables used at each drawing are independent of each other.} , when taking $k_1=k_{1,n}$ , $k=k_n$ and $\beta=\{\beta_t\}_t$ to be  deterministic sequences satisfying $k_{1,n} = o( \log n)$, $k_n = o( \log n)$ such that $k,k_1\to\infty$ as $n\to\infty$, and $\beta_t = \frac{1}{T_0^{(n)}}\log(\lfloor\frac{t}{n}\rfloor+1)$, for some $T_0^{(n)}>n\max(\Delta_1,\Delta_2,\Delta_0)$, where
\begin{align}
\Delta_1= & \max\left|\Ec(y^{i-1}ay_{i+1}^n,z^n,w^n)-\Ec(y^{i-1}by_{i+1}^n,z^n,w^n)\right|,\nonumber\\
		&	i\in\{1,\ldots,n\}\nonumber\\
                   &      y^{i-1} \in \hat{\Xc}^{i-1},    y_{i+1}^n \in \hat{\Xc}^{n-i}, \nonumber\\
                   &      a,b \in \hat{\Xc},\nonumber\\
                   &      z^n\in\hat{\Xc}^n,  w^n\in\hat{\Xc}^n,
\end{align}
\begin{align}
\Delta_2=&\max \left|\Ec(y^n,z^{i-1}az_{i+1}^n,w^n)-\Ec(y^n,z^{i-1}bz_{i+1}^n,w^n)\right|,\nonumber\\
&				i\in\{1,\ldots,n\}\nonumber\\
&                             z^{i-1} \in \hat{\Xc}^{i-1},  z_{i+1}^n \in \hat{\Xc}^{n-i}, \nonumber\\
&                             a,b \in \hat{\Xc},\nonumber\\
&                             y^n\in\hat{\Xc}^n, w^n\in\hat{\Xc}^n,
\end{align}
\begin{align}
\Delta_0= &\max |\Ec(y^n,z^n,w^{i-1}aw_{i+1}^n)-\Ec(y^n,z^n,w^{i-1}bw_{i+1}^n)|.\nonumber\\
			&i\in\{1,\ldots,n\}\nonumber\\
                             &w^{i-1} \in \hat{\Xc}^{i-1}, w_{i+1}^n \in \hat{\Xc}^{n-i},\nonumber \\
                             &a,b \in \hat{\Xc},\nonumber\\
                             &y^n\in\hat{\Xc}^n, z^n\in\hat{\Xc}^n,
\end{align}
As discussed before, the computational complexity of the algorithm at each iteration is independent of $n$ and linear in $k$ and $k_1$. Following exactly the same steps as in the proof of Theorem 2 in \cite{JW_r_d}, we can prove the following theorem which established universal optimality of Algorithm \ref{alg:mcmc_MD}.
\begin{theorem}\label{thm:MCMC_MD}
For any ergodic process $\mathbf{X}$,
\begin{align}
&\lim\limits_{n\to\infty}\lim\limits_{r\to\infty} \Ec(\hat{X}_1^n,\hat{X}_2^n,\hat{X}_0^n)\nonumber\\
&=\min\;[\g_1R_{11}+\g_2R_{22}+\g_{0}R_0+\alpha_1D_1+\alpha_2D_2+\alpha_{0}D_0]
\label{eq: statement thm4}
\end{align}
almost surely, where the minimization is over all
$(R_{11},R_{22},R_0,D_1,D_2,D_0)\in \Rr^{\rm P}(\mathbf{X})$.
\end{theorem}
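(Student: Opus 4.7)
My plan is to reduce the claim to two ingredients already in hand. The first is Theorem~\ref{thm: 2}, which identifies the almost-sure limit of $\min_{(y^n,z^n,w^n)}\Ec(y^n,z^n,w^n)$ with the desired infimum over $\Rr^{\rm P}(\mathbf{X})$. The second is the standard convergence theory for simulated annealing with a logarithmic cooling schedule, applied in exactly the same form in the proof of Theorem~2 of \cite{JW_r_d}. The passage from a single reconstruction sequence there to a triple here is largely cosmetic, provided I verify that the energy landscape on $\hat{\Xc}^n\times\hat{\Xc}^n\times\hat{\Xc}^n$ still admits the estimates needed for the Geman--Geman / Hajek-type convergence theorem.

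For the inner limit $r\to\infty$, I would fix $n$ and $X^n=x^n$ and view Algorithm~\ref{alg:mcmc_MD} as a time-inhomogeneous Markov chain on the finite state space $\hat{\Xc}^n\times\hat{\Xc}^n\times\hat{\Xc}^n$, generated by coordinatewise Gibbs updates at inverse temperatures $\beta_t=(1/T_0^{(n)})\log(\lfloor t/n\rfloor+1)$. At each fixed $\beta$ the chain is irreducible with unique invariant law $p_\beta$ of \eqref{eq: Boltzmann dist}, and $p_\beta$ concentrates on $\argmin\Ec$ as $\beta\to\infty$. A single Gibbs step flips one coordinate of exactly one of $y^n,z^n,w^n$ and changes $\Ec$ by at most $\max(\Delta_1,\Delta_2,\Delta_0)$; any two configurations in the state space are connected by at most $3n$ such flips, so the worst-case cup depth in the landscape is bounded by $n\max(\Delta_1,\Delta_2,\Delta_0)$. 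The condition $T_0^{(n)}>n\max(\Delta_1,\Delta_2,\Delta_0)$ is therefore precisely the Hajek threshold, and the standard simulated-annealing convergence theorem yields, almost surely,
\begin{equation*}
\lim_{r\to\infty}\Ec(\hat{X}_{1,r}^n,\hat{X}_{2,r}^n,\hat{X}_{0,r}^n)=\min_{(y^n,z^n,w^n)}\Ec(y^n,z^n,w^n).
\end{equation*}
With $k_n,k_{1,n}=o(\log n)$ and $k_n,k_{1,n}\to\infty$, Theorem~\ref{thm: 2} then identifies the right-hand side, as a random variable driven by $X^n$, with an almost-sure limit equal to $\min[\g_1 R_{11}+\g_2 R_{22}+\g_0 R_0+\alpha_1 D_1+\alpha_2 D_2+\alpha_0 D_0]$ over $(R_{11},R_{22},R_0,D_1,D_2,D_0)\in\Rr^{\rm P}(\mathbf{X})$, and chaining the two limits in the stated order gives the claim.

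The main obstacle is the quantitative control of $\Delta_1,\Delta_2,\Delta_0$ needed to make the Hajek condition both meaningful and compatible with the outer asymptotic regime of Theorem~\ref{thm: 2}. Each $\Delta_i$ must be shown to be uniformly bounded in $n$ by a quantity that involves only the Lagrangian coefficients $\g_i,\alpha_i$, the alphabet size $|\hat{\Xc}|$, the maximum per-symbol distortion, and a polynomial factor in $k$ and $k_1$. This is the same per-flip perturbation analysis already carried out in \cite{JW_r_d} for the single-sequence energy, applied here to each of the three coordinate blocks separately and, crucially, to the new conditional-empirical-entropy term $H_{k,k_1}(w^n|y^n,z^n)$, which is the only genuinely new piece of the energy functional relative to \cite{JW_r_d}. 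Once those bounds are in place, the remaining items --- verifying that $p_\beta$ is the correct invariant law of the Gibbs dynamics, invoking the simulated-annealing convergence theorem at the Hajek threshold, and identifying the resulting limit via Theorem~\ref{thm: 2} --- proceed verbatim as in the proof of Theorem~2 of \cite{JW_r_d}.
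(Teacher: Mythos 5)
Your proposal is correct and takes essentially the same route as the paper, whose entire proof of Theorem~\ref{thm:MCMC_MD} is the remark that one follows ``exactly the same steps as in the proof of Theorem 2 in \cite{JW_r_d}'': simulated-annealing convergence of the Gibbs chain under the schedule $\beta_t=\frac{1}{T_0^{(n)}}\log(\lfloor t/n\rfloor+1)$ with $T_0^{(n)}>n\max(\Delta_1,\Delta_2,\Delta_0)$ for the inner limit in $r$, followed by Theorem~\ref{thm: 2} to identify the outer limit in $n$. Your added verification that the per-flip energy perturbations, including those of the new term $H_{k,k_1}(w^n|y^n,z^n)$, remain bounded linearly in $k$ and $k_1$ is precisely the adaptation the paper asserts implicitly in its complexity discussion but does not write out.
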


\begin{algorithm}[h!]
\caption{Generating the reconstruction sequences}
\label{alg:mcmc_MD}
\begin{algorithmic}[1]
\REQUIRE $x^n$, $k_1$, $k$, $\{\a_i\}_{i=0}^2$, $\{\b_i\}_{i=0}^2$ $\{ \beta_t \}_{t=1}^r$, $r$
\ENSURE a reconstruction sequences $(\hat{x}_1^n,\hat{x}_2^n,\hat{x}_0^n)$
\STATE $y^n \leftarrow x^n$
\STATE $z^n \leftarrow x^n$
\STATE $w^n \leftarrow x^n$
\FOR{$t=1$ to $r$}
\STATE Draw an integer $i \in \{1, \ldots, n \}$ uniformly at random
\STATE For each $y \in \hat{\Xc}$ compute $q_1(y)=p_{\beta_t} (Y_i = y | Y^{n \setminus i} = y^{n \setminus i},Z^n=z^n,W^n=w^n)$ 
\STATE Update $y^n$ by letting $y_i=V_1$, where $V_1\sim q_1$
\STATE For each $z \in \hat{\Xc}$ compute $q_2(z)=p_{\beta_t} (Z_i = z | Y^n=y^n,Z^{n \setminus i} = z^{n \setminus i},W^n=w^n)$ 
\STATE Update $z^n$ by letting $z_i=V_2$, where $V_2\sim q_2$
\STATE For each $y \in \hat{\Xc}$ compute $p_{\beta_t} (Y_i = y | Y^{n \setminus i} = y^{n \setminus i})$ 
\STATE Update $w^n$ by letting $w_i=V_0$, where $V_0\sim q_0$
\STATE Update $\mb(y^{n})$, $\mb(z^{n})$ and $\mb(w^{n}|y^n,z^n)$
\ENDFOR \STATE $\hat{x}^n \leftarrow y^{n}$
\end{algorithmic}
\end{algorithm}

\section{SIMULATION RESULTS}\label{sec:simulate}

In this section, we present some results showing the actual implementation of the algorithm described in Section \ref{sec: universal MD}. The simulated source here is a sym metric binary Markov source with transition probability $p=0.2$. The considered block length is $n=10^4$, and the context sizes are $k = 5$ and $k_1=1$. The annealing schedule was chosen according to
\[T(t)=\frac{1}{2nt^{\small{1/10}}},\]
where $t$ is the iteration number. The number of iterations, $r$, is equal to $50n$. The algorithm with the specified parameters, for $\g_1=\g_2=\g_0=\a_1=\a_2=a_0 = 1$, achieves the following set of rates and distortions:
\begin{align*}
H_k(\hat{x}^n_1) &=  0.5503, \nonumber\\
H_k(\hat{x}^n_2) &= 0.5586, \nonumber\\
H_{k,k_1}(\hat{x}^n_0|\hat{x}^n_1,\hat{x}^n_2) &= 0.0038,\nonumber\\
d_n(x^n,\hat{x}^n_1) &= 0.0505, \nonumber\\ 
d_n(x^n,\hat{x}^n_2) &=  0.0483,\nonumber\\
d_n(x^n,\hat{x}^n_0) &= 0.0036.\nonumber
\end{align*}
Fig.~\ref{fig: cost reduction} shows how the total cost is reducing in this case, as the number of iterations increases. One interesting thing to note here is that although the sequences $\hat{x}^n_1$ and $\hat{x}^n_2$ have almost the same distance from the original sequence $x^n$, they are far from being equal. In fact, $d_n(\hat{x}^n_1,\hat{x}^n_2)=0.0966$, which, given $d_n(x^n,\hat{x}^n_1) = 0.0505$ and $d_n(x^n,\hat{x}^n_2) =  0.0483$, suggests that they are almost maximally distant.

As another example, consider the case where $n=5\times10^4$ and $\a_1=\a_2=2$. The rest of the parameters are left unchanged. The achieved point in this case is going to be
\begin{align*}
H_k(\hat{x}^n_1) &=  0.6091, \nonumber\\
H_k(\hat{x}^n_2) &= 0.5951, \nonumber\\
H_{k,k_1}(\hat{x}^n_0|\hat{x}^n_1,\hat{x}^n_2) &= 0, \nonumber\\
d_n(x^n,\hat{x}^n_1) &= 0.0200, \nonumber\\
d_n(x^n,\hat{x}^n_2) &=  0.0240, \nonumber\\
d_n(x^n,\hat{x}^n_0) &= 0.0010.
\end{align*}
Here, $H_{k,k_1}(\hat{x}^n_0|\hat{x}^n_1,\hat{x}^n_2)=0$ implies that $\hat{x}_{0,i}$ is a deterministic function of its context,  $(\hat{x}_{0,i-k_1}^{i-1},\hat{x}_{1,i-k_1}^{i+k_1},\hat{x}_{2,i-k_1}^{i+k_1})$. This of course does not mean that no additional rate is required for describing $\hat{x}^n_0$ when the decoder already knows $\hat{x}^n_1$ and $\hat{x}^n_2$, because this deterministic mapping itself is not known to the decoder beforehand. Here again $\hat{x}^n_1$ and $\hat{x}^n_2$ are almost maximally distant because $d_n(\hat{x}^n_1,\hat{x}^n_2)=0.0436$.

Note that the fundamental performance limits are unknown even for memoryless sources and, a fortiori, for the Markov source in our experiment. Thus the performance of our algorithm cannot be compared to the corresponding optimum performance. The results of the preceding section, however, imply that our algorithm attains that performance in the limit of many iterations and large block length. Thus, the performance attained by our algorithm, can alternatively be viewed as approximating the unknown optimum.

\begin{figure}
\begin{center}
\includegraphics[width=9.5cm]{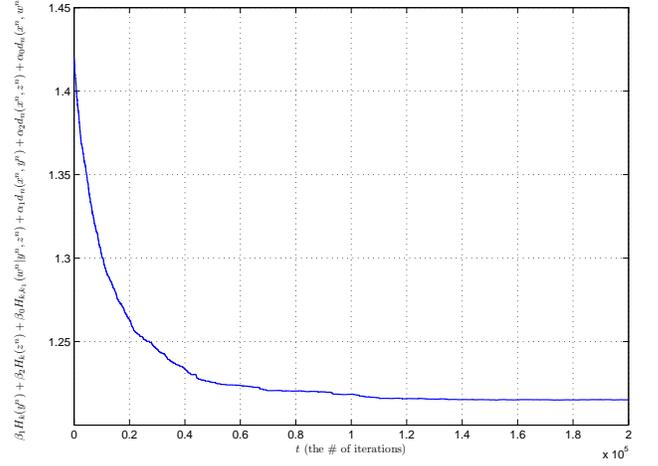}\caption{Reduction in the cost. At the end of the process, the final achived point is: $(H_k(\hat{x}^n_1),H_k(\hat{x}^n_2),H_{k,k_1}(\hat{x}^n_0|\hat{x}^n_1,\hat{x}^n_2),d_n(x^n,\hat{x}^n_1), d_n(x^n,\hat{x}^n_2),$ $d_n(x^n,\hat{x}^n_0))$ $=(0.5503, 0.5586, 0.0038, 0.0505, 0.0483, 0.0036) $}\label{fig: cost reduction}
\end{center}
\end{figure}

\section{FUTURE DIRECTIONS}\label{sec: conclusion}

Simulated annealing was recently employed in \cite{JW_r_d}  to
design a universal lossy compression algorithm. In this paper,
we proved that in fact the same tool can be applied to devise a
universal MD algorithm. We started by defining the equivalent
of MD problem for ergodic processes, and defined the idea of
stationary MD coding which includes three rate constraints
instead of two. Extensions of these results to additional
distributed coding scenarios are under current investigation.

\section*{ACKNOWLEDGMENT}
We thank Jun Chen for suggesting the current proof of Theorem 1, in lieu of our original proof which was more complicated.

\renewcommand{\theequation}{A-\arabic{equation}}
\setcounter{equation}{0}  

\section*{Appendix A: Outline of the proof of Theorem \ref{thm: 1}}
\label{appA}

{\it Outline of the proof of the first part}: Let
$(R_1,R_2,D_1,D_2,D_0)\in\Rr^{\rm B}$. We need to find
$(R_{11},R_{22},R_0)$ such that
$(R_{11},R_{22},R_0,D_1,D_2,D_0)\in\Rr^{\rm P}$, and \eqref{eq:
R_1 in terms of R11} -\eqref{eq: R1+R2 in terms of R11 and R22
and R0} are satisfied.

Let $(f^{(n)},g_1^{(n)},g_2^{(n)},g_0^{(n)})$ be a sequence of
codes at rate $(R_1,R_2)$  that achieves the point
$(R_1,R_2,D_1,D_2,D_0)\in\Rr^{\rm B}$. Note that for a given
code, $(\hat{X}_{1}^n,\hat{X}_{2}^n,\hat{X}_{0}^n)$ is a
deterministic function of $X^n$. Using the same method used in
\cite{Gray_SB}, we can generate jointly stationary ergodic
processes
$(\hat{\mathbf{X}}_1^{(n)},\hat{\mathbf{X}}_2^{(n)},\hat{\mathbf{X}}_0^{(n)})$ by
appropriately embedding these block codes into ergodic
processes.  Here the superscript $(n)$ indicates the dependence of the constructed processes on $n$. 
In order to code an ergodic process into another
ergodic process using a block code of length $n$, we need to
cover an infinite length sequence by non-overlapping blocks of
length $n$  up to a set of negligible measure, and then replace
each block by its reconstruction generated by the block code.
The challenging part is the partitioning which should preserve
the ergodicity. This  can be done using R-K Theorem \cite{R-K
thm} which states that:
\begin{theorem}[Rohlin-Kakutani Theorem] \label{R-K thm}
Given the ergodic source $\textbf{X}$, integers $N$,  $n\leq
N$, and $\epsilon>0$, there exists an event $F$ (called the
\textit{base}) such that
\begin{enumerate}
\item $F,TF,\ldots,T^{N-1}F$ are disjoint,
\item $\P\left(\bigcup\limits_{i=0}^{N-1}T^iF\right)\geq
    1-\epsilon$,
\item
    $\P\left(\mathcal{S}(a^n)|F\right)=\P\left(\mathcal{S}(a^n)\right)$,
    where $\mathcal{S}(a^n)=\{\mathbf{x}: x^n=a^n\}$.
\end{enumerate}
\end{theorem}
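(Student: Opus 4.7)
The plan is to prove the theorem in two stages: first obtain the classical Rohlin tower structure (conditions 1 and 2) via a first-return-time construction, then refine the base to enforce the independence property (condition 3) by rebalancing on the non-atomic measure space. Ergodicity of $T$ on $\Xc^\infty$ for a non-degenerate source guarantees aperiodicity, hence the invariant measure is non-atomic, which we use freely throughout.

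For the first stage, I would pick a measurable set $A$ with $0 < P(A) < \epsilon/(2N)$, which exists by non-atomicity. Define the first-return time $r_A(x) = \min\{k \geq 1 : T^k x \in A\}$, which is finite $P$-a.s.\ by Poincar\'e recurrence, and partition $A$ as $A = \bigcup_{m \geq 1} A_m$ with $A_m = \{x \in A : r_A(x) = m\}$. Set $F_0 = \bigcup_{m \geq N} A_m$. The sets $F_0, T F_0, \ldots, T^{N-1} F_0$ are pairwise disjoint: if $x \in T^i F_0 \cap T^j F_0$ for $0 \leq i < j < N$, then both $T^{-i}x$ and $T^{-j}x$ lie in $A$, forcing $T^{-j}x$ to return to $A$ within $j - i < N$ steps and contradicting $r_A(T^{-j}x) \geq N$. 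Kac's lemma together with $P(A) < \epsilon/(2N)$ then yields $P\!\left(\bigcup_{i=0}^{N-1} T^i F_0\right) \geq 1 - \epsilon/2$, establishing conditions 1 and 2 for $F_0$.

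For the second stage, I would refine $F_0$ to obtain $F$ that is independent of $X^n$. Decompose $F_0 = \bigcup_{a^n \in \Xc^n} F_0^{a^n}$ with $F_0^{a^n} = F_0 \cap \Sc(a^n)$, and choose $F^{a^n} \subseteq F_0^{a^n}$ with measure proportional to $P(\Sc(a^n))$. Specifically, set $c = \min_{a^n : P(\Sc(a^n)) > 0} P(F_0^{a^n}) / P(\Sc(a^n))$, and pick $F^{a^n}$ to be any measurable subset of $F_0^{a^n}$ of measure exactly $c \cdot P(\Sc(a^n))$ (possible by non-atomicity). Letting $F = \bigcup_{a^n} F^{a^n}$, condition 3 is immediate from the construction:
\[
P(\Sc(a^n) \mid F) = \frac{c\, P(\Sc(a^n))}{c \sum_{b^n} P(\Sc(b^n))} = P(\Sc(a^n)).
\]
Conditions 1 and 2 are inherited from the first stage, since $F \subseteq F_0$ preserves disjointness and the coverage loss is bounded by the shrinkage ratio $c / P(F_0)$.

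The main obstacle is the quantitative control forcing $c$ close to $P(F_0)$, so that the refinement does not destroy condition 2. This amounts to arranging that $F_0$ is approximately independent of $X^n$, i.e., $P(F_0^{a^n}) \approx P(F_0) \, P(\Sc(a^n))$ uniformly in $a^n$. I would accomplish this by selecting the seed $A$ from a far-past $\sigma$-algebra $\sigma(X_i : i \leq -K)$ for $K$ large, so that ergodic mixing forces near-independence between $A$ and $X^n$; equivalently, one may take the initial tower height $M \gg N$ and build $F_0$ from this tall tower, so that averaging across levels recovers the marginal distribution of $X^n$ to within any prescribed tolerance. Propagating these estimates through the refinement and adjusting the auxiliary parameters $\epsilon, P(A), M, K$ accordingly then yields condition 2 at the stated $1 - \epsilon$ coverage while preserving condition 3 exactly.
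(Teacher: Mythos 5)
You have two genuine gaps, one of which is fatal as written. In stage 1, your base $F_0=\bigcup_{m\geq N}A_m$ consists only of the \emph{bottom} slice of each return-time column, so $\P(F_0)\leq \P(A)<\epsilon/(2N)$; since $F_0,TF_0,\ldots,T^{N-1}F_0$ are disjoint and $T$ is measure preserving, $\P\bigl(\bigcup_{i=0}^{N-1}T^iF_0\bigr)=N\,\P(F_0)<\epsilon/2$ --- the exact opposite of your claimed coverage. Kac's lemma, $\sum_m m\,\P(A_m)=1$, says the mass sits in the full columns, not in their bases, so making $\P(A)$ small makes your tower \emph{small}, not almost full. The standard fix is to sweep each column in blocks of $N$: take $F_0=\bigcup_{m\geq N}\,\bigcup_{j=0}^{\lfloor m/N\rfloor-1}T^{jN}A_m$; disjointness then still follows from the return-time argument, and the uncovered set (the top fewer-than-$N$ levels of each column plus the columns of height below $N$) has measure at most $2(N-1)\P(A)<\epsilon$. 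Separately, note that aperiodicity must genuinely be hypothesized (a periodic ergodic source admits no such tower once $N$ exceeds the period); your opening assertion that ergodicity of a ``non-degenerate source'' guarantees this should be an explicit assumption, not a consequence.

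In stage 2, the exact rebalancing of $F_0$ across the cylinders $\Sc(a^n)$ via non-atomicity is the right device for getting condition 3 \emph{exactly}, but your mechanism for the quantitative prerequisite $c\approx\P(F_0)$ is unsound: ergodicity does not imply mixing, so choosing the seed $A$ in a far-past $\sigma$-algebra $\sigma(X_i: i\leq -K)$ does not force $A$ (still less $F_0$, which also involves return times) to decorrelate from $X^n$. For a non-mixing ergodic source the correlation with fixed past events need not decay at all; ergodicity yields only Ces\`aro-averaged independence. The classical argument (this is how Shields proves it, and the paper itself gives no proof --- it cites the result from Shields' book, so there is no in-paper proof to compare against) instead builds one very tall tower of height $M\gg N$ and applies the pointwise ergodic theorem along columns: for a.e.\ column, the empirical frequency of levels lying in $\Sc(a^n)$ is within any prescribed $\delta$ of $\P(\Sc(a^n))$, and the new base is assembled from levels spaced $N$ apart, selected column by column so that every cylinder receives approximately its correct share, with a final non-atomic trimming to make condition 3 exact. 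Your last sentence gestures at this tall-tower route, but it is not ``equivalent'' to the mixing claim, and the column-frequency selection it requires is precisely the heart of the Rohlin--Kakutani strengthening that your proposal leaves undone.
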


Since the sequence of MD block codes were assumed to achieve
the point $(R_1,R_2,D_1,D_2,D_0)$, the constructed process
$(\mathbf{\hat{X}}^{(n)}_1,\mathbf{\hat{X}}^{(n)}_2,\mathbf{\hat{X}}^{(n)}_0)$
satisfies the  distortion constraints given in
\eqref{eq:1-2}-\eqref{eq:3-2} at $(D_1+\e_n, D_2+\e_n,
D_0+\e_n)$, where $\e_n\to0$ as $n\to\infty$. Therefore,
$(\bar{H}_n(\mathbf{\hat{X}}^{(n)}_1),\bar{H}_n(\mathbf{\hat{X}}^{(n)}_2),\bar{H}_n(\mathbf{\hat{X}}^{(n)}_0|\mathbf{\hat{X}}^{(n)}_1,\mathbf{\hat{X}}^{(n)}_2),D_1+\e_n,D_2+\e_n,D_0+\e_n)\in\Rr^{\rm P}$. Let
\begin{align}
R_{11}^{(n)}&:=\frac{1}{n}H(\hat{X}^n_1),\\
R_{22}^{(n)}&:=\frac{1}{n}H(\hat{X}^n_2),\\
R_0^{(n)}&:=\frac{1}{n}H(\hat{X}^n_0|\hat{X}^n_1,\hat{X}^n_2),
\end{align}
where $\hat{X}^n_i=g_i^{(n)}(M_i)$, for $i\in\{1,2\}$ and
$\hat{X}^n_0=g_0^{(n)}(M_1,M_2)$. Note that since the encoder
knows $(\hat{X}^n_1,\hat{X}^n_2,\hat{X}^n_0)$, by Theorem
\ref{thm: 0}, $R_{11}^{(n)}\leq R_1$, $R_{22}^{(n)}\leq R_2$,
and $R_{11}^{(n)}+R_{22}^{(n)}+R_{0}^{(n)}\leq R_1+R_2$. The
only remaining step is to find the relationship between
$(\bar{H}_n(\mathbf{\hat{X}}^{(n)}_1),\bar{H}_n(\mathbf{\hat{X}}^{(n)}_2),\bar{H}_n(\mathbf{\hat{X}}^{(n)}_0|\mathbf{\hat{X}}^{(n)}_1,\mathbf{\hat{X}}^{(n)}_2))$
and $(R_{11}^{(n)},R_{22}^{(n)},R_0^{(n)})$, which is not hard
from the way the processes are constructed.

{\it Outline of the proof of the second part}: Let
$(R_{11},R_{22},R_0,D_1,D_2,D_{0})\in\Rr^{\rm P}$. This means
that there exist processes $\hat{\mathbf{X}}_1$,
$\hat{\mathbf{X}}_2$ and $\hat{\mathbf{X}}_{0}$ jointly
stationary and ergodic with $\mathbf{X}$ which satisfy
\eqref{eq:1}-\eqref{eq:3-2}. Based on these processes, for
block length $n$, we use the following block coding strategy:
For coding sequence $X^n$, describe $\hat{X}_1^n$ and
$\hat{X}_2^n$ losslessly to the decoders 1 and 2 using
$n(\bar{H}(\hat{\mathbf{X}}_1)+\e_n)$ and
$n(\bar{H}(\hat{\mathbf{X}}_2)+\e_n)$ bits respectively. Given
$\hat{X}_1^n$ and $\hat{X}_2^n$,
$n(\bar{H}(\hat{\mathbf{X}}_{0}|\hat{\mathbf{X}}_1,\hat{\mathbf{X}}_2)+\e_n)$
bits suffice to describe $\hat{X}_{0}^n$ losslessly to Decoder
$0$. These bits can be divided into two parts: the first part
will be included in the message $M_1$,  and the rest in the
message $M_2$. Decoders  $1$ and $2$ just ignore these extra
bits, but Decoder $0$ combines them with the two other messages
to reconstruct $\hat{X}_0^n$.  Since $R_1$ and $R_2$ satisfy
\eqref{eq: R_1 in terms of R11}-\eqref{eq: R1+R2 in terms of
R11 and R22 and R0}, it is possible to do this.

\renewcommand{\theequation}{B-\arabic{equation}}
\setcounter{equation}{0}  

\section*{Appendix B: Proof of Theorem \ref{thm: 2}}
\label{appB} 

For an ergodic source $\mathbf{X}$, let
\begin{align}
&\mu(\mathbf{\boldsymbol\g},\boldsymbol\a):=\nonumber\\
&\min\limits_{\Rr^{\rm P}(\mathbf{X})}\left[\g_1 R_{11}+\g_2R_{22}+\g_{0}R_0+\alpha_1D_1+\alpha_2D_2+\alpha_{0}D_0\right].
\end{align}
No coding strategy can beat $\mu(\mathbf{\boldsymbol\g},\boldsymbol\a)$ on a set of non-zero probability.
Therefore, the left hand side of \eqref{eq: statement thm2} is lower bounded by its right hand side. Therefore, we
only need to prove the other direction. By definition, for any
$(R_{11},R_{22},R_0,D_1,D_2,D_0)\in \Rr^{\rm P}(\mathbf{X})$, there exist
processes $\hat{\mathbf{X}}_1$, $\hat{\mathbf{X}}_2$ and
$\hat{\mathbf{X}}_0$ such that \eqref{eq:1}-\eqref{eq:3-2} are
satisfied. On the other hand, if
$(\hat{X}_1^n,\hat{X}_2^n,\hat{X}_0^n)$ is generated by jointly
ergodic processes
$(\hat{\mathbf{X}}_1,\hat{\mathbf{X}}_2,\hat{\mathbf{X}}_0)$,
then for $k=o(\log n)$ and $k_1=o(\log n)$ such that $k,k_1\to\infty$ as $n\to\infty$,
$H_k(\hat{X}_i^n)\to\bar{H}(\hat{\mathbf{X}}_i)$, for
$i\in\{1,2\}$, and moreover
$H_{k,k_1}(\hat{X}_0^n|\hat{X}_1^n,\hat{X}_2^n)\to\bar{H}(\hat{\mathbf{X}}_0|\hat{\mathbf{X}}_1,\hat{\mathbf{X}}_2)$.
This implies that
\begin{align}
\limsup \min[&\g_1 H _k(\hat{X}_1^n)+\a_2d_n(X^n,\hat{X}_2^n)+\nonumber\\
&\g_2H_k(\hat{X}_2^n)+\a_1d_n(X^n,\hat{X}_1^n)+\nonumber\\
&\g_0 H_{k,k_1}(\hat{X}_0^n|\hat{X}_1^n,\hat{X}_2^n)+\alpha_{0}d_n(X^n,\hat{X}_0^n)
]\label{eq: cost}
\end{align}
is upper-bounded by $\mu(\mathbf{\boldsymbol\g},\boldsymbol\a)+\e_n$,
where $\e_n\to 0$.  Combining these two results in  the desired
conclusion.

\end{document}